\documentclass[11pt]{article}

\usepackage[preprint]{acl}

\usepackage{times}
\usepackage{latexsym}
\usepackage{amsmath}
\usepackage{amsthm}
\usepackage{amssymb}
\usepackage{multicol}
\usepackage{multirow}
\usepackage{subcaption}
\usepackage[T1]{fontenc}
\usepackage{booktabs}
\usepackage[utf8]{inputenc}

\usepackage{microtype}

\usepackage{inconsolata}

\usepackage{graphicx}
\newtheorem{assumption}{Assumption}

\newtheorem{theorem}{Theorem}

\usepackage{chngcntr}
\newtheorem{lemma}{Lemma}
\title{A Distribution-Free Framework for Rewrite-Based Human-text Detection via Knockoff Filtering}

\author{Yi Liu \\
   Prorata.ai\\
  \texttt{yi@prorata.ai} \\
}

\begin{document}
\maketitle
\begin{abstract}
We propose a distribution-free statistical framework that converts arbitrary rewrite-based detectors into detectors with finite-sample FDR guarantees without retraining. Our key observation is that rewrite-based detection implicitly constructs knockoff samples, enabling LLM-generated text detection to be formulated as a multiple hypothesis testing problem with knockoff structure. This perspective separates the design of detection statistics from the control of false discoveries, allowing existing rewrite detectors to inherit finite-sample false discovery rate (FDR) guarantees through a simple calibration procedure. We demonstrate reliable FDR control with meaningful detection power across three detection models, 19 domains, and four LLMs. 
\end{abstract}

\section{Introduction}

Detecting LLM-generated text has attracted significant recent attention. Existing approaches generally fall into two categories: logit-based methods, which analyze token probability statistics, and rewrite-based methods, which compare a text with an LLM-generated rewrite \cite{gehrmann-etal-2019-gltr,DBLP:conf/nips/HuCH23,solaiman2019release,zhou2026learn, maoraidar, NEURIPS2024_1d35af80,zhu-etal-2023-beat,zhou2025adadetect,hans2024spotting,bao2023fast,su-etal-2023-detectllm,chen2025imitate}.

We observe that rewrite-based detection implicitly constructs text--rewrite pairs that are approximately exchangeable under AI-generated text, naturally inducing a knockoff structure. Under this perspective, rewrite-based detection can be reformulated as a multiple hypothesis testing problem with finite-sample false discovery rate (FDR) guarantees. Building on this observation, we introduce a distribution-free statistical framework that converts arbitrary rewrite-based detectors into detectors with finite-sample FDR guarantees without retraining.
Our contributions are threefold:\\
1. We show that rewrite-based detection implicitly generates knockoff samples, enabling a multiple hypothesis testing formulation.\\
2. We introduce a distribution-free detection procedure that converts rewrite-based statistics into detectors with provable FDR control under a knockoff symmetry condition.\\
3. We provide theoretical and empirical results showing that the proposed framework controls the FDR whenever the knockoff symmetry condition holds.

\subsection{Rewrite-based approaches}

Rewrite-based methods compare a text with an LLM-generated rewrite. For AI-generated text, the original text and its rewrite are approximately exchangeable, since both arise from the same underlying LLM distribution. In contrast, rewriting human text induces a directional shift toward the LLM distribution \cite{zhou2026learn,tulchinskii2023intrinsic}. This asymmetry naturally connects rewrite-based detection to knockoff filtering.

In many NLP applications, LLM detection is not used to classify a single isolated document, but to screen large corpora such as reviews, assignments, web pages, or benchmark data. In corpus-scale detection, threshold selection matters as much as score quality. A detector with high AUROC may still produce many false accusations when applied to thousands of documents. FDR control gives a deployment-facing guarantee: among texts classified as human-written, the expected fraction that are actually AI-generated is bounded by a user-chosen level.  Moreover, many approaches require retraining whenever the rewrite model changes, making them costly and brittle in practice.

\subsection{Knockoff Filter}
We treat AI-generated text as the null hypothesis and human-written text as the alternative; discoveries therefore correspond to texts selected as human-written.
Treating this as a hypothesis testing problem and solving it using knockoff filters offer three principal advantages \cite{10.1111/rssb.12265,barber2015controlling}. First, knockoff methods are intrinsically distribution-free, eliminating the need to explicitly train a classifier. Second, knockoff methods provide strong theoretical guarantees—--an attribute that remains relatively uncommon in NLP literature. Third, the framework can extend many logit-based detection methods to the rewrite setting. This construction enables existing logit-based detectors to obtain FDR control.
\section{Methodology}
\label{sec:Methods}
Our framework consists of three steps:\\
1. Generate a knockoff by rewriting.\\
2. Compute a comparison statistic between the original and knockoff.\\
3. Apply the knockoff filtering rule to select texts classified as human-written while controlling FDR.
\subsection{Generate Knockoffs}
We consider a collection of texts $T_1, T_2, \ldots, T_n$, among which $N_0$ is generated by AI. We assume that $T_1, T_2,\cdots T_n$ are independently generated. The objective is to identify human-written texts. Let $H_i$ denote the indicator variable that equals $1$ if $T_i$ is generated by a human and $0$ otherwise. For each text $T_i$, we prompt an LLM to generate a corresponding rewrite $R_i$. Prior work \cite{zhou2026learn, maoraidar, NEURIPS2024_1d35af80,zhu-etal-2023-beat} introduces a statistic $s_i = f(T_i, R_i)$, which serves as a measure of distance between $T_i$ and  $R_i$. Both theoretical and empirical results in \cite{zhou2026learn} demonstrate that larger values of $s_i$ are associated with a higher likelihood that $T_i$ was written by a human rather than generated by an LLM. We denote the resulting collection of statistics by $\mathcal{S} = {s_1, \ldots, s_n}$.

To prove our theoretical results, we require the standard knockoff sign-flip (symmetry) condition. Lemma \ref{lemma:exchangability_condition} gives a sufficient condition under which it holds for rewrite pairs. Specifically,
\begin{assumption}
\label{assumption:key_assumption}
$\mathbb{P}(f(T_i, R_i) < - a\mid H_i =0) = \mathbb{P}(f(T_i,R_i) > a| H_i = 0)$ for all $a \in \mathbb{R}^+$. 
\end{assumption}
This symmetry arises from an exchangeability property of rewrite pairs under the null hypothesis that the text is AI-generated.

\subsection{Finding Threshold}
In the final step, we want to generate the threshold using $ \mathcal{S}$. We let $q$ be the target FDR. The threshold $\tau$ is defined as 
\begin{equation}
\label{eqn:threshold}
    \tau = \min_{c \in \mathbb{R}^+} \left\{ \frac{\#\{ s_i \leq -c  \}+1}{\max\{\#\{s_j \geq c \}, 1\}} \leq q \right\}
\end{equation}
 larger values of $s_i$ indicate human-written text, we classify $T_i$ as human-written whenever $s_i > \tau$.
\section{Theoretical Analysis}
\label{sec:theory}
The following lemma provides a sufficient condition under which Assumption \ref{assumption:key_assumption} holds. This result is similar to definition 2 in \cite{10.1111/rssb.12265} which inspires exchangeability conditions. The proof is provided in Appendix \ref{appendix:proofs}.
\begin{lemma}
\label{lemma:exchangability_condition}
Let $R_i \sim K(\cdot\mid T_i)$ be a stochastic kernel that generates rewrites for $T_i$ and $\mathbb{P}_{\text{LLM}}(\cdot)$ be the LLM for which $T_i$ is sampled from. If 
$$\mathbb{P}_{\text{LLM}}(T_i)K(R_i\mid T_i) =  \mathbb{P}_{\text{LLM}}(R_i)K(T_i\mid R_i),$$
Then for any anti-symmetric statistics,
$$f(T_i, R_i) = -f(R_i, T_i),$$
$$ \mathbb{P}(f(T_i, R_i) < - a) = \mathbb{P}(f(T_i,R_i) > a) , \forall a \in \mathbb{R}^+$$ 
\end{lemma}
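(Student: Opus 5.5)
The plan is to show that the detailed-balance condition makes the pair $(T_i,R_i)$ exchangeable, i.e.\ $(T_i,R_i)\stackrel{d}{=}(R_i,T_i)$. The symmetry then follows directly from anti-symmetry of $f$. Throughout we work under the null $H_i=0$, so $T_i\sim\mathbb{P}_{\text{LLM}}$ and $R_i\mid T_i\sim K(\cdot\mid T_i)$. The joint law of the pair is therefore $\mu(t,r)=\mathbb{P}_{\text{LLM}}(t)K(r\mid t)$. We treat texts as elements of a countable space (finite token sequences), so all laws are probability mass functions and no measure-theoretic subtleties arise.

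\textbf{Step 1 (exchangeability).} The law of the swapped pair $(R_i,T_i)$ assigns to $(t,r)$ the mass $\mu(r,t)=\mathbb{P}_{\text{LLM}}(r)K(t\mid r)$. The hypothesis of the lemma, read pointwise for all $(t,r)$, says exactly that $\mathbb{P}_{\text{LLM}}(t)K(r\mid t)=\mathbb{P}_{\text{LLM}}(r)K(t\mid r)$, so $\mu(t,r)=\mu(r,t)$. Hence $(T_i,R_i)\stackrel{d}{=}(R_i,T_i)$.

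One should note explicitly that the condition is meant as an identity in the arguments $(t,r)$, not as an equality of random variables. This interpretive point is the only place any care is needed. If one wants a general (non-discrete) version, the same argument works with densities relative to a common dominating measure, or by stating detailed balance as equality of the measures $\mathbb{P}_{\text{LLM}}\otimes K$ and its image under the swap.

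\textbf{Step 2 (symmetry of $f$).} Let $S=f(T_i,R_i)$. Since measurable functions preserve equality in distribution, Step 1 gives $f(T_i,R_i)\stackrel{d}{=}f(R_i,T_i)$. Anti-symmetry gives $f(R_i,T_i)=-f(T_i,R_i)$, so $S\stackrel{d}{=}-S$. Consequently, for any $a>0$,
\[
\mathbb{P}(S<-a)=\mathbb{P}(-S<-a)=\mathbb{P}(S>a),
\]
which is the claim. Since $a$ never needed to be positive, the identity in fact holds for all $a\in\mathbb{R}$.

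This also yields Assumption~\ref{assumption:key_assumption}, since all probabilities above are conditional on $H_i=0$. The main obstacle is Step 1, and it is mainly a matter of correctly formalizing the hypothesis as a pointwise detailed-balance (reversibility) identity. Once that is in place, the rest is a two-line consequence.
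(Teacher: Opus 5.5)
Your proof is correct and follows the same route as the paper: reading detailed balance pointwise gives $(T_i,R_i)\stackrel{d}{=}(R_i,T_i)$, and anti-symmetry of $f$ then gives $f(T_i,R_i)\stackrel{d}{=}-f(T_i,R_i)$. Your version is cleaner, because you state the steps as one-way implications, whereas the paper chains them with $\Leftrightarrow$, and the reverse directions do not hold (symmetry of $f$ does not imply detailed balance).
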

When Assumption 1 holds only approximately—as is typical in practice since no rewriter exactly satisfies the exchangeability condition of Lemma 1—exact finite-sample FDR control is no longer guaranteed, so we monitor symmetry empirically using the frac+ diagnostic reported in Table \ref{tab:symmetry}.
 
\begin{theorem} 
\label{theorem:main}
Under Assumption 1, $\mathcal{S}_\tau = \{ i : s_i > \tau \}$ controls the usual FDR,
$$ \text{FDR} = \mathbb{E} \left(\frac{\#\{i \in\mathcal{S}_\tau, H_i = 0 \}}{\max\{|\mathcal{S}_\tau|, 1\}} \right) \leq q.$$
The guarantee holds in finite samples. 
\end{theorem}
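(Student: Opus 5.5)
This is the knockoff+ argument of Barber and Cand\`es, adapted to the present setting. Throughout I assume the null statistics $\{s_i : H_i=0\}$ are independent of each other and of the non-null statistics. This follows from the independence of the $T_i$, provided each rewrite $R_i$ depends only on $T_i$. I also assume each null $s_i$ has a distribution symmetric about zero, which is what Assumption~\ref{assumption:key_assumption} supplies. There is one caveat. The assumption as stated only equates the tails $\mathbb{P}(s_i<-a)$ and $\mathbb{P}(s_i>a)$, which gives $\mathbb{P}(s_i>0)=\mathbb{P}(s_i<0)$. The martingale step below uses the slightly stronger statement that, conditional on $|s_i|$, the sign of $s_i$ is a fair coin independent of everything else. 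Lemma~\ref{lemma:exchangability_condition} delivers exactly this: swapping $T_i$ and $R_i$ leaves the joint law invariant and negates $f$. So I will work with the sign-flip form, noting that it is implied by the exchangeability condition. Null statistics equal to zero are never selected and are harmless.

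\textbf{Step 1 (reduce to a ratio).} Write $V^+(c)=\#\{i: H_i=0, s_i\ge c\}$ and $V^-(c)=\#\{i:H_i=0, s_i\le -c\}$. Because $\tau$ satisfies the constraint in \eqref{eqn:threshold}, the selected set (up to ties at $\tau$, which only shrink it) satisfies
\[
\frac{V^+(\tau)}{\max\{|\mathcal{S}_\tau|,1\}} = \frac{V^+(\tau)}{1+V^-(\tau)}\cdot\frac{1+V^-(\tau)}{\max\{\#\{s_j\ge\tau\},1\}} \le q\,\frac{V^+(\tau)}{1+V^-(\tau)},
\]
using $V^-(\tau)\le \#\{s_i\le -\tau\}$. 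It therefore suffices to show $\mathbb{E}\bigl[V^+(\tau)/(1+V^-(\tau))\bigr]\le 1$.

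\textbf{Step 2 (supermartingale and optional stopping).} Condition on the magnitudes $|s_i|$ and on all non-null statistics. The null signs are then i.i.d.\ fair coins. Order the null magnitudes and consider the process $M(c)=V^+(c)/(1+V^-(c))$ as $c$ increases through them. Following Barber and Cand\`es, with respect to the filtration revealing $V^\pm(c)$ together with all information about magnitudes at or below $c$, $M$ is a backward supermartingale. Equivalently, reveal signs from the largest magnitude downward. The threshold $\tau$ is a stopping time for this filtration, since whether $c$ satisfies the constraint depends only on counts at levels $\ge c$.

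The key computation is the binomial one. Given $V^+(c)+V^-(c)=m$, we have $V^+(c)\sim\mathrm{Bin}(m,1/2)$, and
\[
\mathbb{E}\Bigl[\frac{V^+}{1+m-V^+}\Bigr]=1-2^{-m}\le 1.
\]
Optional stopping then gives $\mathbb{E}[M(\tau)]\le \mathbb{E}[M(0)]\le 1$. Combined with Step 1, this yields $\text{FDR}\le q$.

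\textbf{Main obstacle.} The delicate step is establishing the supermartingale property. One has to check that, conditional on $V^+(c)+V^-(c)$ and the past, removing the next null magnitude from the count leaves the conditional law of which side it came from uniform over the remaining signed nulls. This is a hypergeometric exchangeability argument, and it is where the full sign-flip symmetry is needed, not just the marginal tail equality. I would handle it by reproducing Lemma~1 of Barber and Cand\`es essentially verbatim, with the null statistics $s_i$ playing the role of the knockoff statistics $W_j$.
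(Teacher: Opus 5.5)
Your proposal is correct and follows the paper's proof essentially step for step. It uses the same reduction $\text{FDP}\le q\,V^+(\tau)/(1+V^-(\tau))$ as Lemma~\ref{lemma:first_inequality}, the same backward supermartingale $M=V^+/(1+V^-)$ with optional stopping at $\tau$ (Lemma~\ref{lemma:supermartingale}), and the same binomial computation giving $\mathbb{E}[M_{N_0}]\le 1$ (Lemma~\ref{lemma:second_inequality}). Two minor points: your caveat about Assumption~\ref{assumption:key_assumption} is unnecessary, since tail equality for every $a>0$ already makes $s_i$ and $-s_i$ equal in law, which for a scalar forces $\mathrm{sign}(s_i)$ to be a fair coin independent of $|s_i|$ (only independence across $i$ must be added); and your aside ``reveal signs from the largest magnitude downward'' is backwards, because the filtration must reveal the signs at magnitudes below $c$, as in your first description, or else $\tau$ is not a stopping time.
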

The proof follows the knockoff filtering argument of \cite{10.1111/rssb.12265, barber2015controlling}, adapted to the rewrite detection setting. The complete proof is presented in Appendix \ref{appendix:proofs}.

\section{Experiments}

\begin{figure*}[t]
  \centering
  \includegraphics[width=\linewidth]{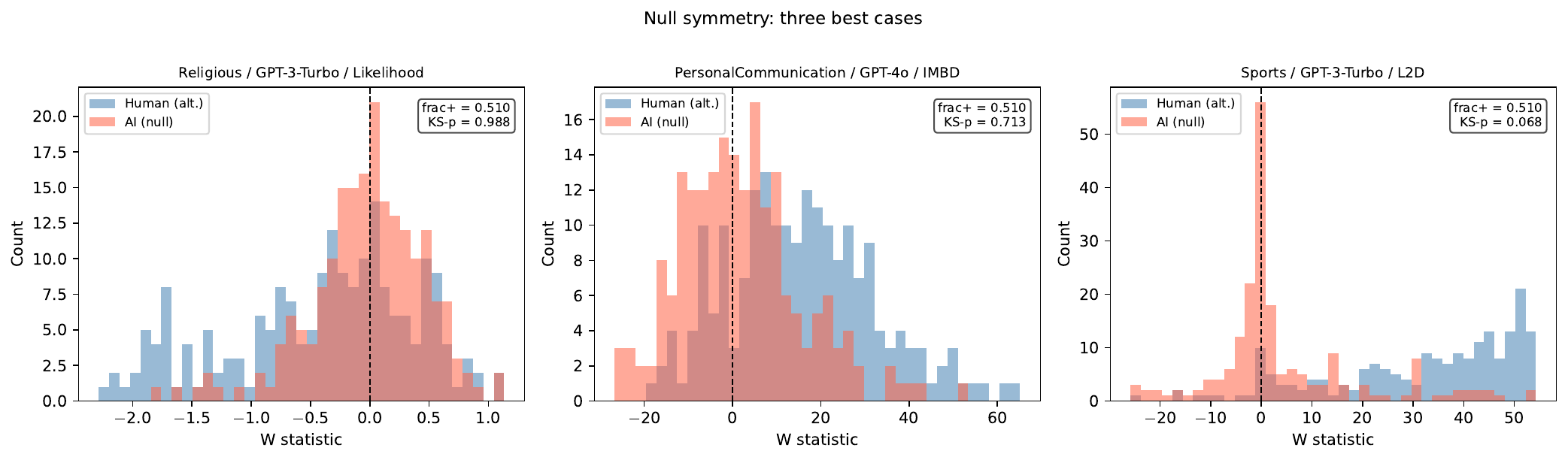}
  \caption{%
    Distribution of knockoff statistics $s_i$ for human (alternative, blue) and
    AI-generated (null, red) texts in representative high-symmetry cases,
    one per rewriting method: Religious/GPT-3.5T (Likelihood, frac$+$\,=\,0.510,
    KS-$p$\,=\,0.988), PersonalCommunication/GPT-4o (IMBD, frac$+$\,=\,0.510,
    KS-$p$\,=\,0.713), and Sports/GPT-3.5T (L2D, frac$+$\,=\,0.510,
    KS-$p$\,=\,0.068).
    The dashed vertical line marks $s_i = 0$.
    The near-symmetric null distribution around zero confirms that
    Assumption~1 holds in practice for these cases.
  }
  \label{fig:null_histograms}
\end{figure*}
\begin{table*}[t]
\small
\centering
\caption{Empirical validation of the symmetry condition (Assumption~1). We report the fraction of knockoff statistics $s_i > 0$ (frac$+$) and the KS test p-value for symmetry of $s_i$, averaged across domains. Values close to 0.5 for frac$+$ indicate the symmetry assumption holds for AI texts. The cross-domain rows show results when the null mean is borrowed from other domains rather than estimated within-domain.}
\label{tab:symmetry}
\begin{tabular}{lcccccccc}
\toprule
Method & \multicolumn{2}{c}{GPT-3.5T} & \multicolumn{2}{c}{GPT-4o} & \multicolumn{2}{c}{Gemini} & \multicolumn{2}{c}{Llama-70B} \\
\cmidrule(lr){2-3}\cmidrule(lr){4-5}\cmidrule(lr){6-7}\cmidrule(lr){8-9}
  & frac$+$ & KS-$p$ & frac$+$ & KS-$p$ & frac$+$ & KS-$p$ & frac$+$ & KS-$p$ \\
\midrule
L2D  & 0.308 & 0.000 & 0.302 & 0.006 & 0.312 & 0.024 & 0.287 & 0.002 \\
Likelihood  & 0.499 & 0.070 & 0.517 & 0.078 & 0.509 & 0.114 & 0.513 & 0.059 \\
IMBD  & 0.466 & 0.042 & 0.475 & 0.040 & 0.483 & 0.070 & 0.469 & 0.043 \\
\bottomrule
\end{tabular}
\end{table*}

\begin{table*}[t]
\small
\centering
\caption{Empirical FDR control and detection power averaged across domains when the null mean is borrowed from other domains (cross-domain transfer). The first row per method reports observed FDR and the second reports detection power at the corresponding target level $q$.}
\label{tab:fdr_power_cross}
\begin{tabular}{llccc|ccc}
\toprule
Method &  & \multicolumn{3}{c|}{GPT-3.5T} & \multicolumn{3}{c}{GPT-4o} \\
\cmidrule(lr){3-5}\cmidrule(lr){6-8}
 &  & $q=0.2$ & $q=0.3$ & $q=0.5$ & $q=0.2$ & $q=0.3$ & $q=0.5$ \\
\midrule
\multirow{2}{*}{L2D } & FDR & 0.173 & 0.183 & 0.203 & 0.169 & 0.183 & 0.201 \\
 & Power & 0.833 & 0.850 & 0.871 & 0.825 & 0.846 & 0.870 \\
\multirow{2}{*}{Likelihood } & FDR & 0.202 & 0.235 & 0.319 & 0.182 & 0.209 & 0.278 \\
 & Power & 0.220 & 0.260 & 0.316 & 0.284 & 0.321 & 0.375 \\
\multirow{2}{*}{IMBD } & FDR & 0.216 & 0.239 & 0.265 & 0.222 & 0.240 & 0.262 \\
 & Power & 0.682 & 0.704 & 0.732 & 0.649 & 0.670 & 0.699 \\
\bottomrule
\end{tabular}
\vspace{4pt}

\begin{tabular}{llccc|ccc}
\toprule
Method &  & \multicolumn{3}{c|}{Gemini} & \multicolumn{3}{c}{Llama-70B} \\
\cmidrule(lr){3-5}\cmidrule(lr){6-8}
 &  & $q=0.2$ & $q=0.3$ & $q=0.5$ & $q=0.2$ & $q=0.3$ & $q=0.5$ \\
\midrule
\multirow{2}{*}{L2D } & FDR & 0.157 & 0.172 & 0.189 & 0.160 & 0.168 & 0.184 \\
 & Power & 0.941 & 0.946 & 0.953 & 0.901 & 0.912 & 0.923 \\
\multirow{2}{*}{Likelihood } & FDR & 0.168 & 0.206 & 0.268 & 0.199 & 0.225 & 0.276 \\
 & Power & 0.440 & 0.499 & 0.578 & 0.471 & 0.529 & 0.610 \\
\multirow{2}{*}{IMBD} & FDR & 0.209 & 0.242 & 0.284 & 0.213 & 0.235 & 0.262 \\
 & Power & 0.903 & 0.921 & 0.944 & 0.790 & 0.808 & 0.834 \\
\bottomrule
\end{tabular}
\end{table*}
Our experiments evaluate two properties of the proposed framework: (1) whether knockoff filtering achieves empirical FDR control across domains and source models, and (2) how statistical power depends on the underlying rewrite detection statistic.

We evaluate the proposed knockoff framework using three detection statistics: ImBD \cite{chen2025imitate}, L2D \cite{zhou2026learn}, and likelihood \cite{gehrmann-etal-2019-gltr}. 
For ImBD and likelihood  we generate scores $g(T_i)$ and $g(R_i)$ for both text and its rewrite using the function estimator $g(\cdot)$ and denote $s_i = f(T_i,R_i) = g(T_i)- g(R_i)$ . For L2D, we use $f(T_i, R_i) = \frac{\log p_{\phi}(T_i)}{|T_i|} - \frac{\log p_{\phi}(R_i)}{|R_i|} $ (we remove the absolute value to obtain a signed statistic), which ensures that the resulting statistic is antisymmetric and centered around zero under the null hypothesis. 

Our evaluation is based  on the dataset released by \cite{hao-etal-2025-learning,zhou2026learn}, covering  19 diverse domains (Academic Research, Art \& Culture, Business, Education, Entertainment,
Environmental, Finance, Food, Government, Legal, Medical,  News, Online
Content, Product Review, Religious, Sports,
Technical Writing, and Travel) with 200 human-written and 200 LLM texts from GPT-3.5Turbo, GPT-4o, Gemini 1.5 Pro, and 
Llama-3-70B-Instruct. The rewrites are done by \texttt{gemma-9b-instruct}.  For L2D and IMBD, we negate the raw statistics 
as for these two methods, larger values consistently correspond to human authorship. 

\subsection{Symmetry Condition}
We assess the symmetry condition (Assumption~\ref{assumption:key_assumption}) using two diagnostics: the fraction of null statistics satisfying $s_{0i} > 0$ ($\text{frac}^+$), which should be close to $0.5$ under symmetry, and the KS test p-value for symmetry of the $s_i$ distribution.

To improve empirical symmetry, we consider a cross-domain transfer setting where the null mean is estimated from a held-out source domain rather than the target domain itself. Concretely, for each source domain we compute the mean
of the AI scores (the null distribution) after sign transformation,
and subtract this value from all scores in every other target domain
without any further centering. The knockoff filter is then applied
directly to these shifted scores. For each target domain, results are
averaged over all 18 source domains, and we report FDR, power,
fraction of positive knockoff statistics (frac$+$), and the KS
symmetry p-value. This setup isolates the contribution of the
demeaning step. 

\subsection{Empirical FDR Control}
We verify empirical FDR control and detection power by comparing 
observed FDR and power to the target level $q \in \{ 0.2, 0.3, 
0.5\}$ for each method and model, averaged across all domains 
(Table \ref{tab:fdr_power_cross}).

\paragraph{L2D and IMBD.} 
FDR ranges from $0.16$ to $0.28$ at $q = 0.2$ across all models and
methods and is slightly above the nominal level. At the same time, detection
power is strong. L2D achieves power of $0.833$--$0.941$ at $q = 0.2$ across models,
and IMBD reaches $0.649$--$0.903$, with both methods performing
strongest on Gemini and Llama-70B. The FDR inflation is fairly stable
across $q$ levels and models.

\paragraph{Likelihood.} For the likelihood method, the model is not trained to identify human-written text from AI written text. However, we gain substantial power with almost no FDR inflation. At $q = 0.2$, power rises to $0.220$--$0.440$ across models, and reaches
$0.471$--$0.529$ on Llama-70B and Gemini at $q = 0.3$. We observe slight FDR inflation for GPT-3.5T cases. This shows that the method performs even without additional training.

\paragraph{Summary.}Across 19 domains and four source models, cross-domain mean correction substantially improves power, while FDR remains close to the nominal level for L2D and is mildly inflated for some method–model pairs. FDR control and meaningful power are jointly 
achieved by L2D and IMBD. 

\section{Discussion}
The central contribution of this work is a statistical calibration framework for controlling false discoveries in existing rewrite-based detectors, rather than designing or training a new detector. We show that rewrite-based detection implicitly constructs knockoff samples, allowing arbitrary rewrite detectors to be converted into finite-sample calibrated detectors with FDR guarantees and without retraining. This separates the design of detection statistics from the control of false discoveries, allowing each component to be improved independently.

On the one hand, our theory proves that under symmetry, false discovery control is guaranteed. On the other hand, our experiments highlight that this methodology is generally robust under different detectors, even for detectors that are not specifically trained to detect LLM text, generating meaningful power and low false discovery rate. 
\section{Limitation}
The main limitation of the proposed framework is its reliance on the knockoff symmetry condition which may only approximately hold in empirical settings. The finite-sample FDR guarantee holds under Assumption 1, but real rewrite models do not exactly satisfy the required exchangeability condition. In practice, the signed statistics can exhibit systematic directional bias, so we apply a mean-correction step before thresholding. This makes calibration quality important: if the estimated centering constant does not adequately restore null symmetry, empirical FDR control may become conservative or anti-conservative.

Our cross-domain experiment illustrates this tradeoff. Borrowing the null mean from other domains substantially improves power, especially for L2D and IMBD, but also introduces mild FDR inflation at some target levels. This suggests that cross-domain calibration is useful when target-domain calibration data are limited, but it should be accompanied by empirical symmetry diagnostics such as frac+ and KS tests. More generally, the proposed method should be viewed as a calibration layer for rewrite-based detectors rather than a complete detector by itself.

A second limitation is statistical power. The knockoff filter controls false discoveries but cannot compensate for an uninformative base statistic. When the rewrite statistic weakly separates human-written and AI-generated text, the procedure may make few discoveries. Improving the underlying rewrite statistic and designing calibration procedures that preserve both symmetry and power remain important directions for future work.
\bibliography{custom}
\newpage
\onecolumn
\appendix
\section*{Appendix}
\counterwithin{table}{section}
\renewcommand{\thetable}{\thesection.\arabic{table}}
\label{appendix}
\section{Theorems and Proofs}

\label{appendix:proofs}
\subsection{Proof of lemma \ref{lemma:exchangability_condition}}
\begin{proof}
\begin{align*}
    &\mathbb{P}_{\text{LLM}}(T_i)K(R_i\mid T_i) =  \mathbb{P}_{\text{LLM}}(R_i)K(T_i\mid R_i)\\
    &\Leftrightarrow \mathbb{P}(T_i, R_i) = \mathbb{P}(R_i, T_i)\\
    &\Leftrightarrow \mathbb{P}(f(T_i, R_i) > a ) = \mathbb{P}(-f(R_i, T_i)> a) \quad \forall a\in \mathbb{R}^+\\
    &\Leftrightarrow \mathbb{P}(f(T_i, R_i) < - a) = \mathbb{P}(f(T_i,R_i) > a)\quad \forall a\in \mathbb{R}^+
\end{align*}
\end{proof}

\subsection{Proof of theorem \ref{theorem:main}}
\subsubsection{Main Proof of Theorem \ref{theorem:main}}
\begin{proof}
    For all $i$ such that $H_i =0$, let $\pi$ denotes a permutation function such that $|s_{\pi(1)}| > |s_{\pi(2)}| > ...|s_{\pi(N_0)}|$. Define a filtration $\mathcal{F}_k$ such that with a threshold $c$ move from $|s_{\pi(1)}|$ to $|s_{\pi(N_0)}|$, revealing $|s_{\pi(1)}|,...,|s_{\pi(k)}|$ and $\text{sign}(s_{\pi(1)}), ..., \text{sign}(s_{\pi(k)})$. 
    
    Under Assumption \ref{assumption:key_assumption}, conditional on this filtration $\mathcal{F}_{k-1}$,  and $|s_i|$, $\text{sign}(s_{\pi(k)})$  is i.i.d. Rademacher variable. 

    Let $V_k= \#\{ i\leq k: \text{sign}(s_{\pi(i)}) = +1 \}$ and $U_k = \#\{i\leq k: \text{sign}(s_{\pi(i)}) = -1\}$ so that $V_k +U_k = k$. Consider the ratio $M_k = V_k/(U_k+1)$ and using from result from Lemma \ref{lemma:supermartingale} , $M_k$ is a super-martingale by reverse filtration. So $\mathbb{E}(M_k) \leq \mathbb{E}(M_{N_0}) \leq 1$ by optional stopping time theorem. Using Lemma \ref{lemma:first_inequality} and Lemma \ref{lemma:second_inequality}, we have that
    $$ \text{FDR} \leq q \mathbb{E}(M_k) \leq q \mathbb{E}(M_{N_0}) \leq q.$$
\end{proof}
\subsubsection{Proof of Supporting Lemmas}
\begin{lemma}
\label{lemma:first_inequality}
if $$ 
    \tau = \min_{c \in \mathbb{R}^+} \left\{ \frac{\#\{ s_i \leq -c \}+1}{\#\{s_j \geq c \}} \leq q \right\},$$
then
    $$ \text{FDR} \leq q \mathbb{E}(M_k).$$
\end{lemma}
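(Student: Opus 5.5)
The idea is to bound the false discovery proportion pointwise by $q$ times a ratio of null counts, using only the definition of $\tau$. The random quantity $M_k$ in the statement should be read as $M$ evaluated at the data-dependent index $k$. This $k$ is the number of null statistics whose magnitude is at least $\tau$, i.e.\ the stopping index of the reverse filtration in the main proof. At that index, $V_k = \#\{i: H_i=0,\ s_i \ge \tau\}$ and $U_k = \#\{i: H_i=0,\ s_i \le -\tau\}$.

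First I write the false discovery proportion at the threshold $\tau$ as
\[
\frac{\#\{i: H_i=0,\ s_i>\tau\}}{\max\{\#\{j: s_j>\tau\},1\}}
= \frac{\#\{i: H_i=0,\ s_i>\tau\}}{1+\#\{i: H_i=0,\ s_i\le -\tau\}}\cdot
\frac{1+\#\{i: H_i=0,\ s_i\le -\tau\}}{\max\{\#\{j: s_j>\tau\},1\}} .
\]
I then bound the two factors separately.

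The first factor is at most $V_k/(U_k+1)=M_k$. This follows because $s_i>\tau$ implies $s_i\ge\tau$, and the denominator equals $U_k+1$.

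For the second factor I enlarge the numerator from null indices to all indices: $\#\{i: H_i=0,\ s_i\le-\tau\}\le \#\{i: s_i\le -\tau\}$. The whole fraction is then at most the ratio appearing in the definition of $\tau$, which is $\le q$ because $\tau$ is chosen in the feasible set.

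Multiplying the two bounds gives $\mathrm{FDP}\le q\,M_k$ pointwise, and taking expectations yields $\text{FDR}\le q\,\mathbb{E}(M_k)$.

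The main obstacle is the strict versus non-strict inequality mismatch. The selection set uses $s_j>\tau$, while the definition of $\tau$ uses $s_j\ge c$. I would handle this by assuming, as in the knockoff literature, that the $s_i$ are almost surely distinct and nonzero, and that the minimum in the definition is attained at some $|s_j|$. If ties at $\tau$ still cause trouble, I would redefine the selection as $\{s_i\ge\tau\}$, which matches Barber--Cand\`es. Under that convention the inequality $\max\{\#\{s_j\ge\tau\},1\}\ge$ the number of selections is exact.

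I also need to handle the cases where the feasible set is empty (then $\tau=\infty$), or where there are no null indices at all. In both cases nothing false is selected and the bound is trivial.
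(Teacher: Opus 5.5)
Your argument is correct and is essentially the paper's. You insert the factor $1+\#\{i:H_i=0,\ s_i\le-\tau\}$, enlarge the null count in its numerator to all $s_i\le-\tau$ so the definition of $\tau$ supplies $q$, and what remains is $M_k$ at the number of nulls with $|s_i|\ge\tau$. One caveat is that distinctness does not remove the $>$ versus $\ge$ mismatch: whenever the minimum is attained at a positive $s_j$, that $s_j$ equals $\tau$ and drops out of $\#\{s_j>\tau\}$ (e.g.\ $q=0.2$ and $s=(5,4,3,2,1)$ give $\tau=1$ but $1/\#\{s_j>1\}=1/4>q$), so your fallback of selecting $\{s_i\ge\tau\}$ is genuinely needed, and it is exactly the convention the paper's proof uses when it writes the FDP with $s_i\ge\tau$.
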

\begin{proof}
\begin{align*}
     \text{FDP} &=  \frac{\#\{ s_i \geq \tau, H_i = 0 \}}{\max\{\#\{s_j \geq \tau \},1\}} \\
     &\leq  \frac{\#\{ s_i \geq \tau, H_i = 0 \}}{\max\{\#\{s_j \geq \tau \},1\}} \times \frac{\#\{ s_i \leq -\tau \}+1}{\#\{s_j \leq- \tau,H_i = 0 \}+1}\\
     & = \frac{\#\{ s_i \leq -\tau \}+1}{\max\{\#\{s_j \geq \tau \},1\}} \times \frac{\#\{ s_i \geq \tau,H_i = 0 \}}{\#\{s_j \leq- \tau,H_i = 0 \}+1}\\
     &\leq q \times \frac{\#\{ s_i \geq \tau, H_i = 0 \}}{\#\{s_j \leq- \tau,H_i = 0 \}+1}\\
     &= q M_k.
\end{align*}
since, by definition, 
$$ 
    \tau = \min_{c \in \mathbb{R}^+} \left\{ \frac{\#\{ s_i \leq -c \}+1}{\#\{s_j \geq c \}} \leq q \right\}.
$$
Then 
$$\text{FDR} = \mathbb{E}(\text{FDP}) \leq q\mathbb{E}(M_k) $$
\end{proof}

\begin{lemma}
\label{lemma:supermartingale}
   For $k = N_0, N_0-1, N_0-2 ,\dots 1, 0$, let $V_k= \#\{ i\leq k: \text{sign}(s_{\pi(i)}) = +1 \}$ and $U_k = \#\{i\leq k: \text{sign}(s_{\pi(i)}) = -1\}$ with $V_0 = U_0  = 0$. Let $\mathcal{F}_k$ be the filtration defined by knowing all the non-null $s_i$ as $V_{k'}$ and $U_{k'}$ for all $k' \geq k$, then 
   $$ M_k = \frac{V_k}{U_k +1 }$$
   is a super-martingale running backward in time with respect to $\mathcal{F}_k$. 
\end{lemma}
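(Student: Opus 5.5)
The approach is the standard Barber--Cand\`es backward-martingale argument. First fix the setup. Condition on the magnitudes $|s_1|,\dots,|s_n|$ and on all non-null statistics $\{s_i : H_i=1\}$. By Assumption~\ref{assumption:key_assumption} and independence of the texts, the null signs $\text{sign}(s_{\pi(1)}),\dots,\text{sign}(s_{\pi(N_0)})$ are then i.i.d.\ Rademacher and independent of the conditioning information. This holds because $P(s>a)=P(s<-a)$ for every $a>0$, so conditionally on $|s|$ the sign is a fair coin; null statistics equal to zero can be dropped, since they never enter $\{s_i\ge c\}$ for $c>0$. Since $\pi$ orders the nulls by decreasing magnitude, $V_k$ counts the positive-signed nulls among the $k$ largest, and $U_k=k-V_k$ counts the negative-signed ones.

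The key step is the one-step supermartingale computation, with the process run backward from $k=N_0$ down to $0$. Conditionally on $\mathcal F_k$, which reveals $V_k$ and $U_k$ (equivalently, the multiset of the first $k$ signs), exchangeability of i.i.d.\ signs makes the sign of $\pi(k)$ a uniformly random draw from those $k$ signs. It is $+1$ with probability $V_k/k$, in which case $M_{k-1}=(V_k-1)/(U_k+1)$. It is $-1$ with probability $U_k/k$, in which case $M_{k-1}=V_k/U_k$.

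Consider first the case $U_k\ge1$. Then
\[
\mathbb E[M_{k-1}\mid\mathcal F_k]=\frac{V_k}{k}\cdot\frac{V_k-1}{U_k+1}+\frac{U_k}{k}\cdot\frac{V_k}{U_k}
=\frac{V_k}{k}\cdot\frac{V_k+U_k}{U_k+1}=M_k .
\]
In the case $U_k=0$ we get $M_{k-1}=V_k-1\le M_k$. So $M_k$ is a backward supermartingale; in fact it is a martingale away from the boundary.

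Next I would check the initial bound $\mathbb E[M_{N_0}]\le1$. Here $V_{N_0}\sim\text{Bin}(N_0,1/2)$ and
\[
\mathbb E\!\left[\frac{V}{N_0-V+1}\right]=\sum_{v\ge1}\binom{N_0}{v}2^{-N_0}\frac{v}{N_0-v+1}=\sum_{v\ge1}\binom{N_0}{v-1}2^{-N_0}=1-2^{-N_0}\le1 .
\]

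Then comes the stopping time. The threshold $\tau$ is determined by the counts $\#\{s_i\le -c\}$ and $\#\{s_i\ge c\}$ as $c$ decreases. On the null part these counts are exactly $(U_k,V_k)$, and the non-null part is already in the conditioning. So the index $k^*=\#\{i:H_i=0,\ |s_i|\ge\tau\}$ is a stopping time for the backward filtration: whether $\tau$ has been reached at level $|s_{\pi(k)}|$ depends only on the information at scales at least that large. This is the step I expect to be the main obstacle. One must make sure $\mathcal F_k$ carries enough information, namely all magnitudes, all non-null signs, and the running counts $(V_{k'},U_{k'})$ for $k'\ge k$. Without it, ``$\tau\ge|s_{\pi(k)}|$'' would not be measurable. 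Ties and the $\min$ over $c\in\mathbb R^+$ (effectively over the finitely many $|s_j|$) need care. Optional stopping for a bounded-horizon backward supermartingale then gives $\mathbb E[M_{k^*}]\le\mathbb E[M_{N_0}]\le1$.

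Finally, Lemma~\ref{lemma:first_inequality} gives the pathwise bound $\text{FDP}\le q\,M_{k^*}$. This uses $\#\{i:H_i=0,\ s_i\le-\tau\}\le\#\{s_i\le-\tau\}$ together with the defining inequality of $\tau$ in \eqref{eqn:threshold}. Taking expectations yields $\text{FDR}\le q\,\mathbb E[M_{k^*}]\le q$. Since every step is conditional and exact, the bound holds in finite samples, which proves Theorem~\ref{theorem:main}.
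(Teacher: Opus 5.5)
Your argument for the lemma is correct and is essentially the paper's: conditionally on $\mathcal F_k$, exchangeability of the null signs makes the sign of $\pi(k)$ equal to $+1$ with probability $V_k/k$, so $\mathbb E[M_{k-1}\mid\mathcal F_k]=M_k$ when $U_k\ge 1$ and $M_{k-1}=M_k-1$ when $U_k=0$. Your arithmetic is also cleaner than the paper's displayed line, which writes $\frac{V_k+1}{U_k-1}$ where $\frac{V_k-1}{U_k+1}$ is meant, and your remaining steps ($\mathbb E[M_{N_0}]\le 1$, optional stopping, and the pathwise FDP bound) go beyond this lemma and match the paper's surrounding lemmas and main proof.
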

\begin{proof}
    Note that the filtration $\mathcal{F}_k$ informs us about whether $k$ is null or not, since the non-null process is known exactly. On the one hand, if $k$ is non-null, then
    $M_k = M_{k-1}$ since $V_k$ and $U_k$ are counting nulls. On the other hand, if $k$ is null, then
    $$M_{k-1} = \frac{V_k-I}{\max \{U_k + I,1\}} \quad \text{where $I = \mathbb{I}_{s_k < c}$ }$$ 
So $\mathcal{F}_k$ gives no further knowledge about $I$, and it follows from the exchangeability property of the $s_k$— they are i.i.d. and thus exchangeable —that $\mathbb{P}(I = 1) = V_k/(V_k + U_k
)$. Thus in the case where $H_k=0$,
\begin{align*}
    \mathbb{E}(M_k\mid \mathcal{F}_k) &= \frac{1}{V_k+U_k}\left(V_k\frac{V_k +1 }{U_k - 1} + U_k \frac{V_k}{\max(U_k,1)}\right)\\
    &= \begin{cases}
        &\frac{V_k}{1+U_k}, \quad U_k >1\\
        &V_k-1, \quad U_k = 0
    \end{cases}
\end{align*}
Therefore, 
$$\mathbb{E}(M_{k-1} \mid \mathcal{F}_k) = \begin{cases}
M_k \quad \text{$s_k$ is non-null}\\
M_k \quad \text{$s_k$ is null $U_k >0$}\\
M_k -1 \quad \text{$s_k$ is null and $U_k =0$}
\end{cases}$$
which shows that $M_k$ is supermartingale.
\end{proof}
\begin{lemma}
\label{lemma:second_inequality}
    $\mathbb{E}(M_{N_0}) \leq 1$
\end{lemma}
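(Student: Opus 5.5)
We need $\mathbb{E}(M_{N_0}) \le 1$, where $M_{N_0} = V_{N_0}/(U_{N_0}+1)$. Here $V_{N_0}$ and $U_{N_0}$ count the positive and negative signs among all $N_0$ null statistics. The plan is to compute this expectation directly from the sign distribution: no martingale argument is needed at this endpoint.

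\textbf{Step 1: distribution of the signs.} By Assumption~\ref{assumption:key_assumption} and the independence of $T_1,\dots,T_n$ (hence of the pairs $(T_i,R_i)$), the null signs $\text{sign}(s_{\pi(1)}),\dots,\text{sign}(s_{\pi(N_0)})$ are i.i.d.\ fair coin flips. This holds conditionally on the magnitudes $|s_i|$ and on the non-null statistics. We will assume null statistics are almost surely nonzero; otherwise we discard ties at zero, which only decreases $V_{N_0}$. Consequently, conditional on these quantities, $V_{N_0}\sim\text{Binomial}(N_0,1/2)$ and $U_{N_0}=N_0-V_{N_0}$.

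\textbf{Step 2: the binomial computation.} Write $m=N_0$ and $V=V_{N_0}$. We need
\[
\mathbb{E}\!\left[\frac{V}{m-V+1}\right]=\sum_{v=1}^{m}\binom{m}{v}2^{-m}\frac{v}{m-v+1}.
\]
Use the identity $\binom{m}{v}\frac{v}{m-v+1}=\binom{m}{v-1}$. It holds because
\[
\binom{m}{v}\frac{v}{m-v+1}=\frac{m!\,v}{v!\,(m-v)!\,(m-v+1)}=\frac{m!}{(v-1)!\,(m-v+1)!}.
\]
Substituting, the sum equals
\[
2^{-m}\sum_{v=1}^{m}\binom{m}{v-1}=2^{-m}\left(2^m-1\right)=1-2^{-m}\le 1.
\]
If $N_0=0$, then $M_{N_0}=0$ trivially.

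\textbf{Step 3: integrate out the conditioning.} The bound holds conditionally for every value of the magnitudes and the non-null statistics. Taking expectations with the tower property gives $\mathbb{E}(M_{N_0})\le 1$ unconditionally.

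\textbf{Main obstacle.} The only real difficulty is Step 1: justifying that the signs are exactly fair coins independent of the magnitudes. Assumption~\ref{assumption:key_assumption} gives symmetry of each marginal law, $s_i \overset{d}{=} -s_i$. Independence across $i$ then makes the joint sign vector uniform given the magnitudes: symmetry implies $(|s_i|,\text{sign}(s_i))$ has sign independent of magnitude with $\mathbb{P}(+)=1/2$ whenever $s_i\neq 0$. I would verify this step carefully, since it is the same fact the main proof invokes to treat $\text{sign}(s_{\pi(k)})$ as Rademacher.
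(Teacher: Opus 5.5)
Your proof is correct and follows the paper's argument: $V_{N_0}\sim\text{Binomial}(N_0,1/2)$, the identity $\binom{m}{v}\frac{v}{m-v+1}=\binom{m}{v-1}$, and summation giving $1-2^{-N_0}\le 1$. You also justify the binomial law from Assumption~\ref{assumption:key_assumption} plus independence, which the paper only asserts, though your ``discard ties, which only decreases $V_{N_0}$'' remark is loose; the cleaner fix is to condition on the number $Z$ of null statistics equal to zero and rerun the same computation with $m=N_0-Z$.
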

\begin{proof}
    Notice that $V_{N_0} \sim \text{Binomial}(N_0,\frac{1}{2})$ and $U_{N_0} = N_0 - V_{N_0}$, then we have 
    \begin{align*}
    M_{N_0} &= \mathbb{E}\left(\frac{V_{N_0}}{N_0 - V_{N_0} +1}\right)\\
    & = \sum_{v=0}^{N_0} \frac{v}{N_0 - v+1} {N_0 \choose v}\left(\frac{1}{2}\right)^v\left(1- \frac{1}{2}\right)^{N_0-v}\\
    &= \sum_{v=0}^{N_0} \frac{v}{N_0 - v+1} \frac{N_0!}{v!(N_0-v)!}\left(\frac{1}{2}\right)^v\left(1- \frac{1}{2}\right)^{N_0-v}\\
    &= \sum_{v=0}^{N_0}  \frac{N_0!}{(v-1)!(N_0-v+1)!}\left(\frac{1}{2}\right)^v\left(1- \frac{1}{2}\right)^{N_0-v}\\
    &= \frac{\frac{1}{2}}{1-\frac{1}{2}}\sum_{v=0}^{N_0}  \frac{N_0!}{(v-1)!(N_0-v+1)!}\left(\frac{1}{2}\right)^{v-1}\left(1- \frac{1}{2}\right)^{N_0-v+1}\\
    &\leq 1
\end{align*}
\end{proof}
\newpage 
\twocolumn
\section{Experiment}
\subsection{Data Preparation}
\footnote{\tiny The code are in \texttt{https://anonymous.4open.science/r/ KnockoffIdentification-A2CC/README.md}} The data are from \cite{zhou2026learn, hao-etal-2025-learning}\footnote{\tiny\texttt{https://github.com/Mamba413/L2D/tree/main/exp\_diverse}}. The licenses for each of the data are shown in Table \ref{tab:dataset_details}. \footnote{\tiny We skipped Code and Creative Writing category because it was skipped in the cross-training for IMBD in \cite{zhou2026learn}.}Our experiments use publicly available benchmark and web datasets collected from existing research corpora and public sources, as documented in Appendix B.1.

We do not collect new personally identifying information (PII), nor do we attempt to identify individuals. The datasets are used in their original released form and are restricted to research and evaluation purposes. We follow the original dataset access conditions and use the data only for research evaluation.
\subsection{Rewrite Setup}
All three methods share a common rewriting backbone: texts are rewritten
using \texttt{gemma-9b-instruct} (top-$p{=}0.96$, temperature$=0.7$,
$K{=}4$ rewrites per text). We use the exact implementation and preprocessing given in \cite{zhou2026learn} and use their rewrite data if we found it. 
However, if rewrite data is not available, we regenerate the write using the following rewrite prompt: \texttt{You are a rewriting expert and
you would rewrite the text without missing the original details.
Return ONLY the rewritten version. Do not explain changes, do not
give multiple options, and do not add commentary. Original text:.}

\subsection{Detection Methods}
\paragraph{L2D.}

For each text $T_i$, we generate $K=4$ rewrites $R_i^{(1)},\ldots,R_i^{(K)}$
using \texttt{gemma-9b-instruct}.
An \emph{AdaDist} scoring head, built on top of \texttt{gemma-9b-instruct},
is a model that is fine-tuned on 500 training samples using an AUC objective to produce
a scalar score $s_i = f(T_i, R_i)$ that is larger when $T_i$ is human (the original
scores higher than its rewrite) and smaller when $T_i$ is AI-generated. We use the model in \cite{chen2025imitate} and did not re-train the model.
At evaluation, the trained model assigns a signed score to each test text;
these scores are used directly as knockoff statistics $s_i$.

\paragraph{IMBD.}
IMBD reuses the same \texttt{gemma-9b-instruct} rewrites produced for L2D.
Instead of an AUC objective, the scoring model is fine-tuned via
\emph{Stochastic Preference Optimisation} (SPO; $\beta{=}0.05$,
lr$=10^{-4}$), which trains the model to prefer human texts over their
LLM rewrites. We generate the model for using the first half the data and apply it to the second half and vice-versa to generate scores independent of model.

After generating the score $g(\cdot)$ for $T_i$ and its rewrite $R_i$, we take the knockoff statistic to be 
$$s_i = g(T_i)- g(R_i)$$ 
which is then fed into the knockoff filter.
\paragraph{Likelihood.}
The likelihood method requires \emph{no training}.
For each text $T_i$ and its rewrite $R_i$ (again from \texttt{gemma-9b-instruct}),
we compute the signed statistic
\begin{equation}
  s_i \;=\; g(R_i) - g(T_i),
  \label{eq:likelihood_stat}
\end{equation}
where $g(\cdot)$ denotes the per-token log-likelihood under
\texttt{gemma-1b}. 
For human texts, the LLM rewrite $R_i$ achieves higher likelihood than the
original $T_i$, yielding $s_i > 0$.
For LLM texts, $T_i$ and $R_i$ are both samples from the same
distribution, so $s_i$ is approximately symmetric around zero---directly
satisfying Assumption~\ref{assumption:key_assumption}. We found that using \texttt{gemma-9b-instruct} as the likelihood model introduces a systematic directional bias in the null statistics, resulting in very skewed distribution even under null.

\subsection{Knockoff Filter}

After obtaining signed statistics $\{s_i\}$ from any of the three methods
above, we apply the knockoff filter identically across all methods.

\paragraph{Section~4.1 --- Symmetry condition.}
We verify Assumption~\ref{assumption:key_assumption} empirically by computing the
fraction of null (AI-text) statistics satisfying $s_i > 0$ (frac$+$) and
applying a Kolmogorov--Smirnov test for symmetry around zero.
Values of frac$+$ close to $0.5$ and large KS $p$-values indicate that the
assumption holds. We remove the mean $s_i$ from all the $s_i$ as a demean procedure.

\paragraph{Section~4.2 --- Empirical FDR control.}
We first center the statistics by subtracting their empirical mean as a calibration step, then apply the knockoff threshold procedure at target levels
$q \in \{ 0.2, 0.3, 0.5\}$.
A text is flagged as human if its statistic exceeds the data-adaptive
threshold $\tau(q)$; the procedure guarantees actual FDR $\leq q$ whenever
the symmetry assumption holds.
\newpage
\subsection{More Experiment results}

Figure \ref{fig:per_domain_all_q} disaggregates the empirical FDR and detection power from Table 2 across all 19 domains at each target level $q \in { 0.2, 0.3, 0.5}$. Each panel shows grouped bars for GPT-3.5T, GPT-4o, Gemini-1.5-Pro, and Llama-3-70B; the dashed horizontal line marks the nominal FDR target.

\begin{table*}[t]
\centering
\small
\begin{tabular}{lll}
\hline
\textbf{Category} & \textbf{Source} & \textbf{License} \\
\hline
Academic Research      & Arxiv abstracts \cite{maoraidar}                                           & Various CC licenses        \\
Art Culture           & Wikipedia                                                                   & CC BY-SA                   \\
Business              & Wikipedia                                                                   & CC BY-SA                   \\
Educational Material  & Ghostbuster essays \cite{verma2024ghostbuster}                              & CC BY 3.0                  \\
Entertainment         & IMDb dataset \cite{imdb2024}, Stanford SST2 \cite{socher-etal-2013-recursive}      & IMDb terms of use, CC Zero \\
Environmental         & Climate-Ins \cite{spokoyny2023towards}                                      & CC Zero                    \\
Finance               & Hugging Face FIQA \cite{thakur2021beir}                                     & CC BY-NC                   \\
Food                  & Kaggle fine food reviews \cite{mcauley2013amateurs}                         & CC Zero                    \\
Government            & Wikipedia                                                                   & CC BY-SA                   \\
Legal                 & CaseHOLD \cite{zheng2021does}                                               & Apache 2.0                 \\ 
Medical Text          & PubMedQA \cite{jin2019pubmedqa}                                             & MIT                        \\
News Article          & XSum \cite{narayan2018don}                                                 & MIT                        \\
Online Content        & Hugging Face blog authorship \cite{schler2006effects}                       & Non-commercial             \\ 
Product Review        & Yelp reviews \cite{maoraidar}                                           & Yelp terms of use          \\
Religious             & Bible, Buddha, Koran, Meditation, and Mormon                                & N/A                        \\
Sports                & Olympics website \cite{olympics2024}                                           & Olympics terms of use      \\
Technical Writing     & Scientific articles \cite{mosca2023distinguishing}                          & CC Zero                    \\
Travel Tourism        & Wikipedia                                                                   & CC BY-SA                   \\
\hline
\end{tabular}
\caption{Source and license for each of the 19 domains in our dataset.}
\label{tab:dataset_details}
\end{table*}
\begin{figure*}[ht]
  \centering
  \begin{subfigure}[t]{1\linewidth}
    \includegraphics[width=\linewidth]{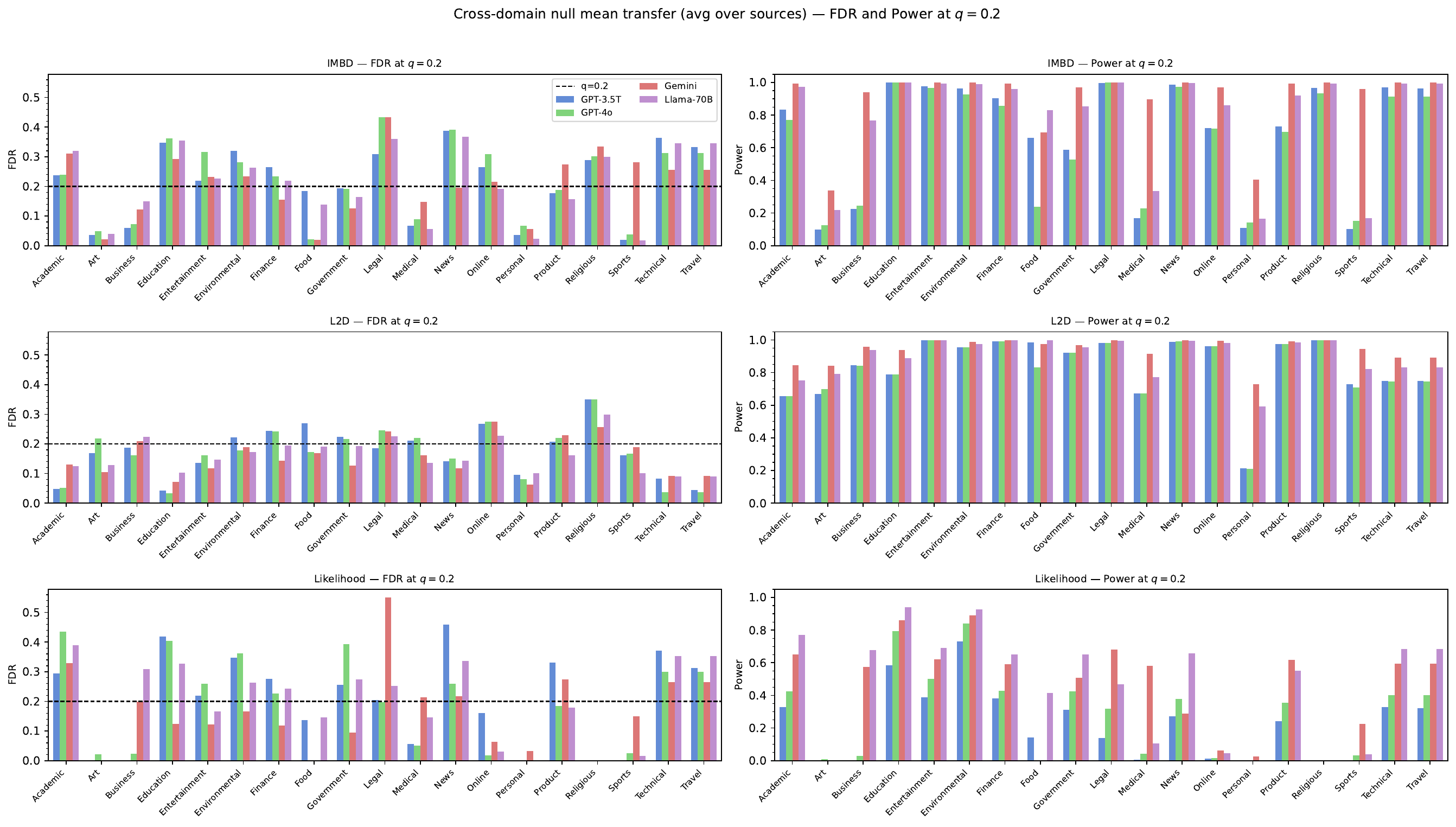}
    \caption{$q=0.2$}
\vspace{6pt}
  \end{subfigure}\hfill
  \begin{subfigure}[t]{1\linewidth}
    \includegraphics[width=\linewidth]{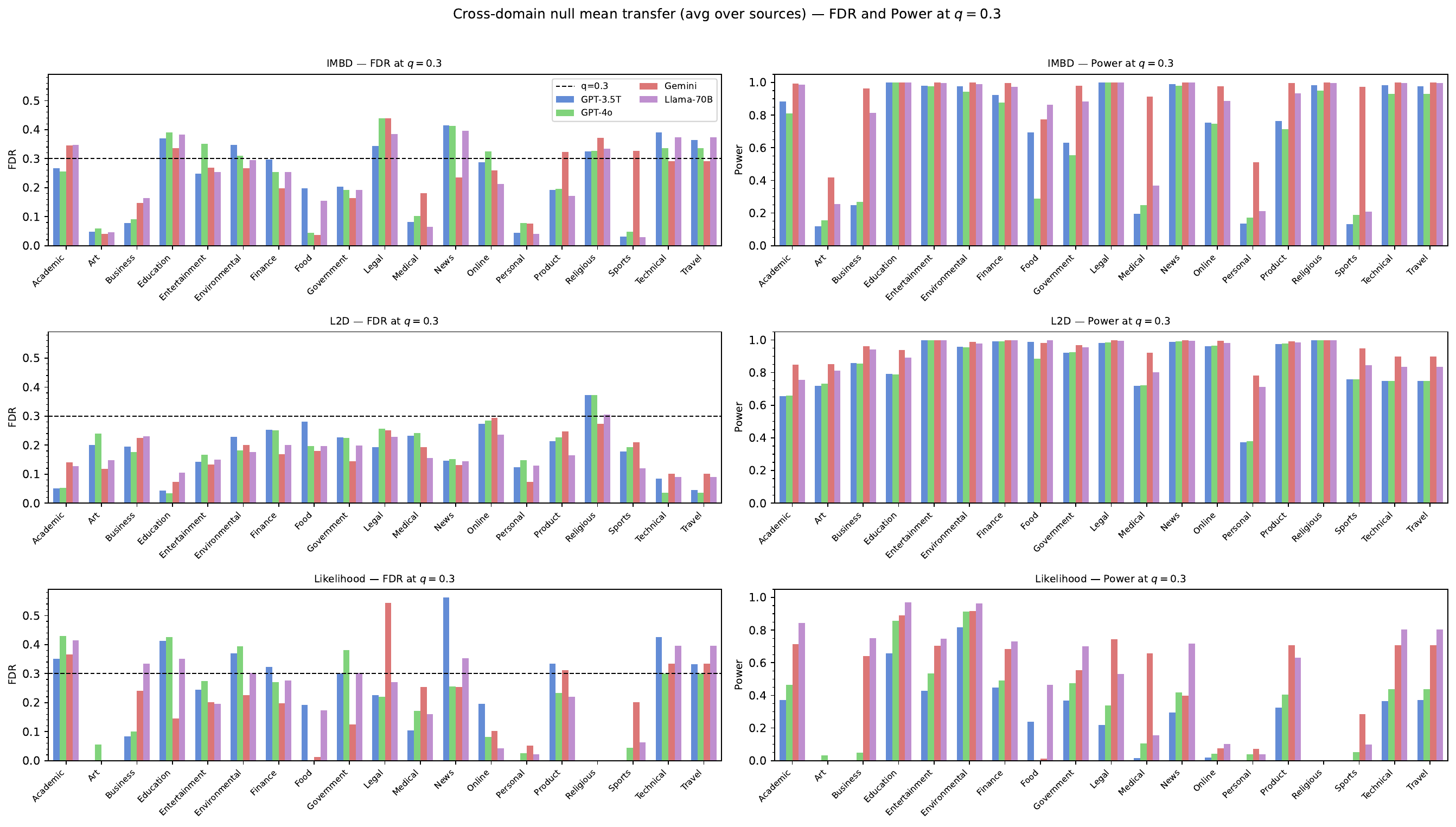}
    \caption{$q=0.3$}
  \end{subfigure}
\end{figure*}
  \vspace{6pt}
\begin{figure*}[ht]
  \centering
  \begin{subfigure}[t]{1\linewidth}
    \includegraphics[width=\linewidth]{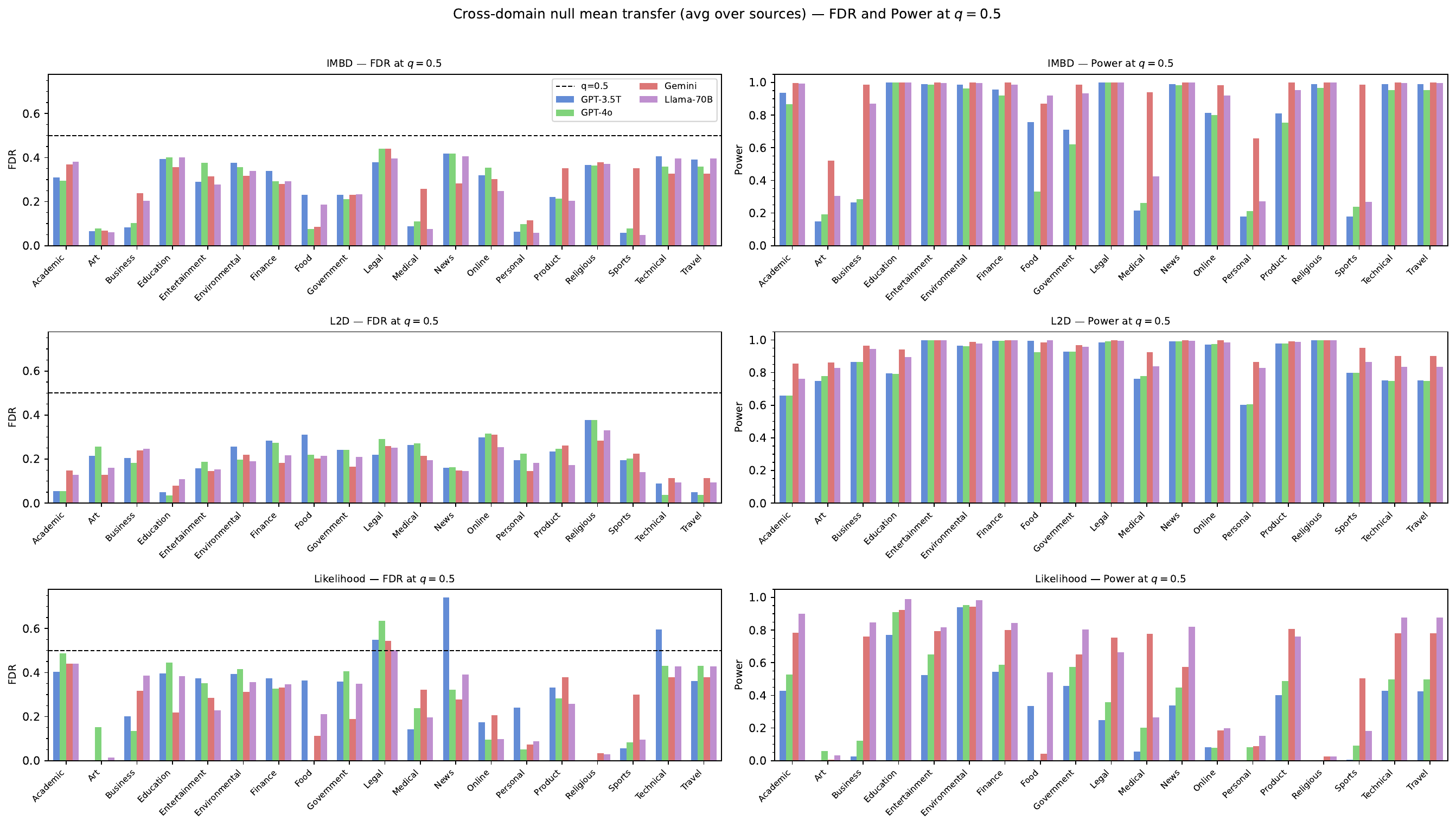}
    \caption{$q=0.5$}

  \end{subfigure}\hfill
  \caption{Per-domain FDR and Power across all methods and models at each
           target level $q \in \{ 0.2, 0.3, 0.5\}$.
           Dashed lines mark the corresponding target FDR.}
  \label{fig:per_domain_all_q}
\end{figure*}
\section{Use of AI}
We use AI for writing improvement and code debugging. AI is used to improve the writing of the text.

\end{document}